\numberwithin{equation}{section}
\newtheorem{Theorem}{Theorem}[section]
\newtheorem{Prop}[Theorem]{Proposition}
\newtheorem{Example}[Theorem]{Example}
\newcommand{\R}{\mathbb{R}}
\newcommand{\Z}{\mathbb{Z}}
\newcommand{\ve}{\varepsilon}
\newcommand{\e}{{\rm e}}
\newcommand\ot{\otimes}
\newcommand{\Pth}{{\mathcal P}}
\newcommand{\wt}{{\rm wt}}
\newcommand{\proof}{\noindent \textit{Proof. }}
\newcommand{\qed}{\hfill $\Box$}
\title{Commuting Time Evolutions in the Tropical Periodic Toda Lattice}
\author{\name{Taichiro \surname{Takagi}}\thanks{E-mail address: takagi@nda.ac.jp}
}
\begin{document}
\maketitle
\section{Introduction}\label{sec:1}
The Toda lattice\cite{Toda} is one of the most important examples of
classical integrable systems that
appears in many topics in modern mathematics and physics.
Recently,
among the growing interest in tropical mathematics\cite{SS04, MZ06, IMS2009},
the {tropical periodic Toda lattice} (trop p-Toda) \cite{IKT11}
(also known as the {ultradiscrete periodic Toda lattice} \cite{KT02})
is attracting attentions \cite{IT07,IT08,IT09,IT09b}.
This is a dynamical system 
derived from a time-discretized version of
the periodic Toda lattice
through a limiting procedure called tropicalization.
The time evolution of this dynamical system is described by the following equations
\begin{align}\label{i:eq:UDpTodax}
\begin{split}
&Q_j^{t+1} = \min(W_j^t, Q_j^t-X_j^t),
\qquad X_j^t = \min_{0 \leq k \leq N-1}
\bigl(\sum_{l=1}^k (W_{j-l}^t - Q_{j-l}^t)\bigr),
\\
&W_j^{t+1} = Q_{j+1}^t+W_j^t - Q_j^{t+1}.
\end{split}
\end{align}
The initial value problem of
this system was solved
in terms of tropical Riemann theta functions. 
However, so far no phase flows of the system have been discovered,
except that for the original time evolution \eqref{i:eq:UDpTodax}.
In this paper 
we construct
a family of commuting flows (generalized time evolutions) in trop p-Toda.
The construction of such flows will be highly significant 
for the future studies of various
tropical/ultradiscrete dynamical systems,
because they can determine the global structure of their level sets
as in the case of Liouville-Arnold's theorem for Hamiltonian systems \cite{Ar}.

A special case of trop p-Toda is known as the periodic box-ball system (pBBS) \cite{YYT},
a one-dimensional cellular automaton.
In this case the theory of Kashiwara's crystals\cite{K} for affine Lie algebra $\widehat{\mathfrak{sl}}_2$ enables us to construct
a family of commuting time evolutions \cite{KTT}, that eventually allows one to
identify its level set structure \cite{T10}.
Indeed, our construction of the phase flows in the
general trop p-Toda also comes from
an important concept in the theory of crystals,
{\em the combinatorial} $R$\cite{KMN1,NY97}.

Let us explain the idea briefly.
Without loss of generality,
we can assume 
all the $Q,W$-variables in
\eqref{i:eq:UDpTodax} take their values in $\mathbb{R}_{>0}$.
This enables us to represent the time evolution of trop p-Toda by
a sequence of two-colored (white and black) strips,
where the lengths of the white (resp.~black) segments are denoted by
$W_j$s (resp.~$Q_j$s).
See Fig.~\ref{f1} in \S \ref{sec:4}.
First we recall the corresponding result in pBBS\cite{KTT}.
Now the values of the $Q,W$-variables in \eqref{i:eq:UDpTodax}
are restricted to $\mathbb{Z}_{>0}$.
Hence the segments in each color can be replaced by 
finite sequences of identical letters, say $1$ for white and $2$ for black. 
For instance, a state of pBBS at time $t$ may be represented as
\begin{equation*}
11\overbrace{222..22}^{Q_1^t}\overbrace{11...1}^{W_1^t}
\overbrace{2...22}^{Q_2^t}\overbrace{1...1}^{W_2^t}
........\overbrace{11...1}^{W_{N-1}^t}\overbrace{22...2}^{Q_N^t}111
\end{equation*}
where the number of $1$'s in the both ends amounts to $W_N^t$.
In this case, the letters $1$ and $2$ are regarded as elements of 
the $\widehat{\mathfrak{sl}}_2$ crystal $B_1$.
For any positive integer $l$, the combinatorial
$R: B_l \otimes B_1 \rightarrow B_1 \otimes B_l$ for
the isomorphism of the $\widehat{\mathfrak{sl}}_2$ crystals
yields a time evolution $T_l$.
The original time evolution \eqref{i:eq:UDpTodax} is given by $T_\infty$, and
the commutativity of the time evolutions $[T_l, T_k] = 0$
follows from the Yang-Baxter relation\cite{KS80} satisfied by the map $R$.

The case of general trop p-Toda is treated as follows.
For any positive real number $l$,
let $\mathbb{B}_l$ be a continuous analogue of the $\widehat{\mathfrak{sl}}_2$ crystal $B_l$.
It is related to the theory of geometric crystals\cite{BK00}, an algebro-geometric analogue of 
the theory of Kashiwara's crystals.
This enables us to interpret
the combinatorial $R$ also as a map 
$R: \mathbb{B}_l \times \mathbb{B}_{l'} \rightarrow \mathbb{B}_{l'} \times \mathbb{B}_l$.
We divide a two-colored strip $S_1$ into black and white segments, and  successively apply
a map $\overline{R}$, a simple modification of $R$,
to the segments with various lengths $l'$. 
Finally we recombine the resulting segments into a new strip $S_2$.
With some ideas to treat the periodic boundary condition,
one gets a family of time evolutions $T_l: S_1 \mapsto S_2$ with the commutativity $[T_l, T_k] = 0$.

This paper is organized as follows.
In \S \ref{sec:2} and \S \ref{sec:3} we give brief reviews on the procedure of tropicalization
and the notions of birational and combinatorial $R$ maps.
In \S \ref{sec:4} we define the trop p-Toda and give its description
in terms of two-colored strips.
In \S \ref{sec:5} we give a brief review on pBBS.
In \S \ref{sec:6} we construct a family of time evolutions in trop p-Toda.
In \S \ref{sec:7} we prove the commutativity of the time evolutions.
The main result of this paper is Theorem \ref{t:july15f}.
Finally, we give several discussions in \S \ref{sec:8}.

\section{Tropicalization}\label{sec:2}
In this section we give an exposition on the procedure of {\em tropicalization}
based on Ref.~\citen{IKT11}.
We restrict ourselves to use this procedure to derive some relevant piecewise-linear equations,
without trying to discuss related notions in tropical geometry such as tropical polynomials or tropical curves\cite{IMS2009}.

Define $\mathbb{T} = \R \cup \{ \infty \}$
where $\infty$ is an element satisfying 
$ a < \infty$ and $\infty + a = \infty$ for any $a \in \R$. 
The algebra $(\mathbb{T}, \oplus, \odot)$ is called the {\it min-plus algebra} 
where the two operations $\oplus$ and $\odot$
are defined by
$$
a \oplus b := \min(a, b), 
\qquad 
a \odot b := a + b.
$$
They are called
{\it tropical addition} and {\it tropical multiplication} respectively.
Obviously one has
$$
a \oplus \infty = a,
\qquad 
a \odot 0 = a,
$$
for any $a \in \mathbb{T}$, 
hence $\infty$ is the additive zero and $0$ is the multiplicative unit. 
The multiplicative inverse of $a(\ne \infty)$ is given by $-a$,
but no $a \in \mathbb{T}$ has its additive inverse. 
This algebra is isomorphic to the {\it max-plus algebra}
$(\mathbb{T}', \oplus', \odot)$ where $\mathbb{T}' = \R \cup \{ -\infty \}$ and $a \oplus' b := \max(a, b)$
\cite{IMS2009}.

We introduce a limiting procedure called the {\it tropicalization}.
Actually, there are two versions of this procedure: the min-plus version and the max-plus version.
The min-plus version of the tropicalization
links the subtraction-free algebra
$(\R_{>0},+,\times)$ to the min-plus algebra $(\mathbb{T}, \oplus, \odot)$. 
We define a map $\mathrm{Log}_\ve : \R_{>0} \to \R$ with an infinitesimal 
parameter $\ve > 0$ by
\begin{align}
  \label{i:loge-map}
  \mathrm{Log}_\ve : a \mapsto - \ve \log a.
\end{align}
For $a > 0$, define $A \in \R$ by $a = \e^{-\frac{A}{\ve}}$.
Then we have $\mathrm{Log}_\ve (a) = A$. 
Moreover, for $a, b > 0$ define $A,B \in \R$ by $a = \e^{-\frac{A}{\ve}}$ and 
$b = \e^{-\frac{B}{\ve}}$. 
Then we have  
$$
  \mathrm{Log}_\ve (a + b) = 
  -\ve \log (\e^{-\frac{A}{\ve}} + \e^{-\frac{B}{\ve}}),
  \quad
  \mathrm{Log}_\ve (a \times b) = A + B.
$$
In the limit $\ve \to 0$, $\mathrm{Log}_\ve (a + b)$ becomes $\min(A,B)$.
The procedure $\lim_{\ve \to 0} \mathrm{Log}_\ve$ 
with the transformation as $a = \e^{-\frac{A}{\ve}}$
is called (the min-plus version of) the tropicalization.
%
Through this procedure,
subtraction-free rational equations on $\R_{>0}$ reduce to 
piecewise-linear equations on $\R$ described by min-plus algebra.
See Ref.~\citen{IKT11} for more details, and a definition of the max-plus version
of the tropicalization.

%

We remark that these procedures are conventionally called {\em ultradiscretizations}.
Since the variables $A, B$ in the above are not required to take discrete values,
we reserve the word ultradiscretization for
a special case of the tropicalizations\cite{IKT11}. 

\section{Birational $R$ and Combinatorial $R$}\label{sec:3}
The {\em birational} $R$ \cite{IKT11,Y01} is a notion related to the theory of geometric crystals\cite{BK00}. 
In this paper
the birational $R$ for affine Lie algebra $\widehat{\mathfrak{sl}}_N$ has two roles.
While it gives an explicit formula for the solution of a discrete periodic Toda lattice equation for general $N$,
its tropicalization (with a slight modification) for $N=2$ will be used to define a local time evolution rules
in trop p-Toda.

Let ${\mathcal B} = \{x = (x_1,\ldots, x_{N})  \} \subset ({\mathbb R}_{>0} )^{N}$ be a set of variables.
The birational $R$
for $\widehat{\mathfrak{sl}}_{N}$ is the birational map 
${\mathcal R}: {\mathcal B} \times {\mathcal B} \rightarrow {\mathcal B} \times {\mathcal B}$ specified by
${\mathcal R}(x,y) = (\tilde{y},\tilde{x})$ in which
\begin{equation}\label{t:eq:RPP}
\begin{split}
&\tilde{x}_i=x_i \frac{P_{i-1}(x,y)}{P_i(x,y)}, \quad \tilde{y}_i=y_i \frac{P_i(x,y)}{P_{i-1}(x,y)},\\
&P_i(x,y)=\sum_{k=1}^{N} \left( \prod_{j=k}^{N} x_{i+j} \prod_{j=1}^k y_{i+j} \right),
\end{split}
\end{equation}
where all the indices are considered to be in $\Z_{N}$.
It satisfies the inversion relation ${\mathcal R}^2=id$ on ${\mathcal B} \times {\mathcal B}$ and the 
Yang-Baxter relation 
\begin{equation}\label{t:YBtropR}
{\mathcal R}_1{\mathcal R}_2{\mathcal R}_1 = {\mathcal R}_2{\mathcal R}_1{\mathcal R}_2
\end{equation}
on ${\mathcal B} \times {\mathcal B} \times {\mathcal B}$,
where ${\mathcal R}_1(x,y,z) = ({\mathcal R}(x,y),z)$ and 
${\mathcal R}_2(x,y,z) = (x,{\mathcal R}(y,z))$. 
The birational $R$ 
is characterized as the unique solution to the following equations
\begin{equation}\label{t:eq:toda}
x_iy_i = \tilde{y}_i \tilde{x}_i,\qquad 
\frac{1}{x_i}+\frac{1}{y_{i+1}} = 
\frac{1}{\tilde{y}_i}+\frac{1}{\tilde{x}_{i+1}},
\end{equation}
with an extra constraint
$\prod_{i=1}^{N}(x_i/\tilde{x}_i) = 
\prod_{i=1}^{N}(y_i/\tilde{y}_i)= 1$.

Consider the max-plus version of the tropicalization of
the birational $R$.
For $l > 0$ let $\mathbb{B}_l = \{ (X_1, \ldots, X_N) \in (\R_{\geq 0})^N | \sum_{i=1}^N X_i = l \}$.
Given $l,l' > 0$, $X = (X_1, \ldots , X_{N} ) \in \mathbb{B}_l$ and $Y = (Y_1, \ldots , Y_{N} ) \in \mathbb{B}_{l'}$, let 
$\tilde{X} = (\tilde{X}_1, \ldots , \tilde{X}_{N} )$,
$\tilde{Y} = (\tilde{Y}_1, \ldots , \tilde{Y}_{N} )$
be defined by 
\begin{equation}\label{t:udRPP}
\begin{split}
&\tilde{X}_i = X_i - P_i(X,Y) + P_{i-1}(X,Y), \quad
\tilde{Y}_i = Y_i + P_i(X,Y) - P_{i-1}(X,Y), \\
&P_i(X,Y) = \max_{1 \leq k \leq N} \left( \sum_{j=k}^{N} X_{i+j} + \sum_{j=1}^{k} Y_{i+j} \right).
\end{split}
\end{equation}
Then we have:
\begin{Prop}
The $\tilde{X}_i, \tilde{Y}_i$ are non-negative.
\end{Prop}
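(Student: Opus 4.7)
The plan is to show the two inequalities
\[
P_{i}(X,Y) - P_{i-1}(X,Y) \leq X_i, \qquad P_{i-1}(X,Y) - P_i(X,Y) \leq Y_i,
\]
which are precisely $\tilde X_i \geq 0$ and $\tilde Y_i \geq 0$. Writing $S_i(k) := \sum_{j=k}^{N} X_{i+j} + \sum_{j=1}^{k} Y_{i+j}$ for the $k$th expression under the max in $P_i(X,Y)$, a direct substitution $m = j-1$ together with the cyclic convention $X_{i+N} = X_i$ yields the translation identity
\[
S_{i-1}(k+1) = S_i(k) + Y_i - X_i \qquad (k = 1, \ldots, N-1),
\]
which relates almost all terms in the max defining $P_{i-1}$ to those defining $P_i$, up to the constant shift $Y_i - X_i$.

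With this in hand, the first step for $\tilde X_i \geq 0$ is to bound $S_i(k) \leq P_{i-1} + X_i$ for every $k$: for $k \leq N-1$ the identity combined with $Y_i \geq 0$ gives this at once, while the single boundary case $k = N$ follows from the direct evaluation $S_i(N) = X_i + l'$ combined with the lower bound $P_{i-1} \geq S_{i-1}(N) = X_{i-1} + l' \geq l'$. Taking the maximum over $k$ yields $P_i \leq P_{i-1} + X_i$.

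The argument for $\tilde Y_i \geq 0$ is symmetric: bound $S_{i-1}(k) \leq P_i + Y_i$ for every $k$. For $k \geq 2$ the identity together with $X_i \geq 0$ gives this, and for the boundary $k = 1$ one uses the explicit value $S_{i-1}(1) = l + Y_i$ against the lower bound $P_i \geq S_i(1) = l + Y_{i+1} \geq l$.

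The only real obstacle is spotting the translation identity and noticing that exactly one index ($k=N$ in the first case, $k=1$ in the second) falls outside it; once this is observed, the boundary terms yield to a one-line use of the trivial lower bounds $P_{i-1} \geq l'$ and $P_i \geq l$, respectively, coming from the coordinate sum constraints defining $\mathbb{B}_l$ and $\mathbb{B}_{l'}$.
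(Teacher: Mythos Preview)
Your proof is correct and is essentially the same argument as the paper's: both match the $k$th term $S_i(k)$ of $P_i$ against the $(k{+}1)$th term $S_{i-1}(k{+}1)$ of $P_{i-1}$ for $k<N$ (the paper's choice $k'=k{+}1$ is exactly your translation identity), and handle the single boundary index $k=N$ (respectively $k=1$ for $\tilde Y_i$) separately using the coordinate-sum constraints. The paper is simply terser---it asserts the choice of $k'$ without displaying the identity---but the content is identical.
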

\begin{proof}
Let $Z_{i,k} = \sum_{j=k}^{N} X_{i+j} + \sum_{j=1}^{k} Y_{i+j}$.
To prove $\tilde{X}_i \geq 0$, it suffices to show that for any $1 \leq k \leq N$ there exists $1 \leq k' \leq N$ such that
$X_i -Z_{i,k} + Z_{i-1,k'} \geq 0$.
If one takes $k' = N$ for $k=N$ and $k'=k+1$ for $1 \leq k < N$, then this condition is satisfied.
To prove $\tilde{Y}_i \geq 0$, it suffices to show that for any $1 \leq k \leq N$ there exists $1 \leq k' \leq N$ such that
$Y_i +Z_{i,k'} - Z_{i-1,k} \geq 0$.
Here $k' = 1$ for $k=1$ and $k'=k-1$ for $1 < k \leq N$ satisfy the condition.
\qed
\end{proof}
\par\noindent
Thus one can define a map $R: \mathbb{B}_l \times \mathbb{B}_{l'} \rightarrow \mathbb{B}_{l'} \times \mathbb{B}_l$ by
$R(X,Y) = (\tilde{Y},\tilde{X})$.
This $R$ satisfies the inversion relation and the Yang-Baxter relation on
$\mathbb{B}_l \times \mathbb{B}_{l'}$ and $\mathbb{B}_l \times \mathbb{B}_{l'} \times \mathbb{B}_{l''}$ respectively.
We call this map a {\em combinatorial} $R$ as in the theory of crystals.

For applications to trop p-Toda,
we give an interpretation of the elements of the set $\mathbb{B}_l$.
Suppose there are $N$ kinds of colors.
Regard $X = (X_1, \ldots , X_{N} ) \in \mathbb{B}_l$ as a (at most) $N$-colored strip of length $l$
made by concatenating strips of length $X_i$ with color $i$.
In this paper we restrict ourselves to the $N=2$ case. 

\section{Discrete Periodic Toda Lattice and Its Tropicalization}\label{sec:4}
The discrete periodic Toda lattice (dp-Toda) is given by the evolution equations
\begin{align}\label{i:eq:Toda-q}
\begin{split}
&q_j^{t+1} = q_j^t + w_j^t - w_{j-1}^{t+1} ,
\\
&w_j^{t+1} = \frac{q_{j+1}^t w_j^t}{q_{j}^{t+1}},
\end{split}
\end{align}
with the periodic boundary condition $q_{j+N}^t = q_j^t$, $w_{j+N}^t = w_j^t$.
This is a dynamical system with a discretized time variable and from which one can derive the original Toda lattice\cite{Toda}
as a continuum limit \cite{IKT11}.
Note that, \eqref{t:eq:toda} 
is equivalent to \eqref{i:eq:Toda-q}
under the substitution of variables
$q^t_{j+1}=1/x_j, w^t_j = 1/y_j, q^{t+1}_j = 1/{\tilde x}_j, 
w^{t+1}_j = 1/{\tilde y}_j$.

The tropical periodic Toda lattice (trop p-Toda) is given by 
the piecewise-linear evolution equations
\begin{align}\label{i:eq:UDpToda}
\begin{split}
&Q_j^{t+1} = \min(W_j^t, Q_j^t-X_j^t),
\qquad X_j^t = \min_{0 \leq k \leq N-1}
\bigl(\sum_{l=1}^k (W_{j-l}^t - Q_{j-l}^t)\bigr),
\\
&W_j^{t+1} = Q_{j+1}^t+W_j^t - Q_j^{t+1},
\end{split}
\end{align}
on the phase space
$\displaystyle{
\mathcal{T} 
= 
\{(Q_j,W_j)_{j \in \Z / N\Z} ~|~ \sum_{j=1}^N Q_j < \sum_{j=1}^N W_j \}
\subset \R^{2N}.}
$

\begin{Prop}
Trop p-Toda \eqref{i:eq:UDpToda} is obtained 
as a tropicalization of dp-Toda
\eqref{i:eq:Toda-q} with
the condition
$\prod_{j=1}^N w_j^t / q_j^t < 1$ so that the tropicalization of 
$(1- \prod_{j=1}^N w_j^t / q_j^t)$ is zero.
\end{Prop}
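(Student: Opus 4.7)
The dp-Toda system \eqref{i:eq:Toda-q} is implicit: the first equation contains the subtraction $-w^{t+1}_{j-1}$, which cannot be tropicalized directly. The plan is therefore to solve the system in closed rational form whose only non-subtraction-free ingredient is a single global factor $1 - \prod_{l=1}^N w^t_l/q^t_l$, and then to apply $\lim_{\ve \to 0} \mathrm{Log}_\ve$ term by term.

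\textbf{Step 1 (closed-form solution).} Substituting $w^{t+1}_{j-1} = q^t_j w^t_{j-1}/q^{t+1}_{j-1}$ from the second equation of \eqref{i:eq:Toda-q} into the first produces the periodic M\"obius recursion
\begin{equation*}
q^{t+1}_j \, q^{t+1}_{j-1} \;=\; (q^t_j + w^t_j)\, q^{t+1}_{j-1} \;-\; q^t_j\, w^t_{j-1}, \qquad j \in \Z/N\Z.
\end{equation*}
Composing this recursion $N$ times and imposing $q^{t+1}_{j-N}=q^{t+1}_j$ yields a quadratic fixed-point equation whose physically relevant root is an explicit rational function of the $q^t_l, w^t_l$; its only departure from subtraction-freeness is the single global factor $1 - \prod_{l=1}^N w^t_l/q^t_l$ forced by the cyclic boundary condition. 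Equivalently, the same expressions arise from the birational $R$ formula \eqref{t:eq:RPP} under the substitution $q^t_{j+1}=1/x_j$, $w^t_j = 1/y_j$, $q^{t+1}_j = 1/\tilde x_j$, $w^{t+1}_j = 1/\tilde y_j$. The update for $w^{t+1}_j$ is then read off from the second equation of \eqref{i:eq:Toda-q}, which is already subtraction-free once $q^{t+1}_j$ is known.

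\textbf{Step 2 (tropicalization and matching).} Setting $q^t_j = \e^{-Q^t_j/\ve}$ and $w^t_j = \e^{-W^t_j/\ve}$, the hypothesis $\prod_l w^t_l/q^t_l < 1$ becomes $\sum_l Q^t_l < \sum_l W^t_l$, which is precisely the inequality defining the phase space $\mathcal{T}$. Under this condition $\prod_l w^t_l/q^t_l = \e^{-(\sum W - \sum Q)/\ve} \to 0$ as $\ve \to 0$, so $-\ve \log\bigl(1 - \prod_l w^t_l/q^t_l\bigr) \to 0$ and this lone non-subtraction-free factor tropicalizes to the multiplicative identity $0$ of the min-plus algebra, contributing nothing to the limit. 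The remaining subtraction-free numerator and denominator tropicalize by the standard rules ($\times \mapsto +$, $+ \mapsto \min$, $/ \mapsto -$), and a direct inspection matches the resulting piecewise-linear formulas to \eqref{i:eq:UDpToda}: the $N$ monomials appearing in the explicit numerator for $q^{t+1}_j$ produce exactly the $N$ candidates $\sum_{l=1}^k(W^t_{j-l}-Q^t_{j-l})$, $0 \le k \le N-1$, whose minimum defines $X^t_j$.

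\textbf{Main obstacle.} The substantive work lies in Step 1: despite the cyclic coupling, one must verify that the M\"obius iteration collapses to a rational expression whose \emph{only} non-subtraction-free feature is the single global factor $1 - \prod_l w^t_l/q^t_l$. This can be handled either by a direct analysis of the discriminant of the quadratic fixed-point equation (which factors as a perfect square containing $(1 - \prod w/q)^2$) or, more efficiently, by citing the birational $R$ formula \eqref{t:eq:RPP}. Once Step 1 is in place, Step 2 is mechanical, relying only on the elementary fact that $-\ve \log(1 - \e^{-c/\ve}) \to 0$ as $\ve \to 0^+$ for any fixed $c > 0$.
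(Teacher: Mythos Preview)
The paper does not supply its own proof of this proposition; it attributes the result to Kimijima--Tokihiro (Ref.~\citen{KT02}) and points to Proposition~6.13 of Ref.~\citen{IKT11} for a derivation via the birational $R$. Your sketch is a faithful reconstruction of the Kimijima--Tokihiro argument and is correct as it stands.

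One streamlining you may find useful: setting $g_j := 1/(q_j^{t+1}-w_j^t)$ turns your M\"obius recursion into the \emph{linear} first-order recursion
\[
g_j \;=\; \frac{1}{q_j^t} \;+\; \frac{w_{j-1}^t}{q_j^t}\,g_{j-1},
\]
whose periodic closure immediately gives
\[
g_j \;=\; \Bigl(1-\prod_{l=1}^N \tfrac{w_l^t}{q_l^t}\Bigr)^{-1}
\sum_{k=0}^{N-1}\frac{\prod_{l=1}^{k} w_{j-l}^t}{\prod_{l=0}^{k} q_{j-l}^t}.
\]
This produces the single factor $\bigl(1-\prod w/q\bigr)$ directly, with no quadratic discriminant to analyze, and tropicalizing $q_j^{t+1}=w_j^t+1/g_j$ then yields $Q_j^{t+1}=\min(W_j^t,\,Q_j^t-X_j^t)$ on the nose: the $N$ monomials in the denominator above are exactly what give $X_j^t$ after $\mathrm{Log}_\ve$.
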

This proposition is due to Ref.~\citen{KT02}.
For a proof through the birational $R$ \eqref{t:eq:RPP}, see 
Proposition 6.13 of Ref.~\citen{IKT11}.

Let $\tilde{Q}_j^t = Q_j^t + C, \tilde{W}_j^t = W_j^t + C$ where $C$ is a constant.
Then the new variables $\tilde{Q}_j^t,\tilde{W}_j^t$ also satisfy the equation 
\eqref{i:eq:UDpToda}.
In addition, we have:
\begin{Prop}
If $Q_j^t,W_j^t >0$ for some $t$ and for all $j$, then trop p-Toda
\eqref{i:eq:UDpToda} ensures $Q_j^{t+1},W_j^{t+1} >0$ for all $j$.
\end{Prop}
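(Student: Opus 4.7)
The plan is to get both positivity claims directly from the defining equations \eqref{i:eq:UDpToda}, with no need to invoke the phase-space inequality $\sum_j Q_j^t < \sum_j W_j^t$.

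First I would record the elementary observation that $X_j^t \le 0$: the index $k = 0$ in the minimum $X_j^t = \min_{0\le k\le N-1}\sum_{l=1}^k(W_{j-l}^t - Q_{j-l}^t)$ contributes the empty sum, which equals zero, so the minimum cannot exceed $0$. This is the only use of the formula for $X_j^t$ that the proof will need.

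Next I would establish $Q_j^{t+1}>0$. Since $X_j^t \le 0$ we have $Q_j^t - X_j^t \ge Q_j^t > 0$, and $W_j^t > 0$ by hypothesis; therefore $Q_j^{t+1} = \min(W_j^t, Q_j^t - X_j^t)$ is the minimum of two positive numbers, hence positive. To get $W_j^{t+1}>0$ I would exploit the trivial bound $Q_j^{t+1} \le W_j^t$, which holds simply because $Q_j^{t+1}$ is a minimum one of whose arguments is $W_j^t$. Substituting this into the defining relation yields
\begin{equation*}
W_j^{t+1} \;=\; Q_{j+1}^t + W_j^t - Q_j^{t+1} \;\ge\; Q_{j+1}^t \;>\; 0,
\end{equation*}
using the positivity of $Q_{j+1}^t$ in the final inequality.

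There is no genuine obstacle here; the argument is a direct manipulation of the piecewise-linear update rule. The only point worth flagging is conceptual: the nontrivial phase-space restriction $\sum_j Q_j^t < \sum_j W_j^t$ plays no role in preserving positivity per time step, even though it is essential for the dynamics to be well posed and for $X_j^t$ to be interpreted as a proper min over $0,\dots,N-1$ rather than needing the $k=N$ term.
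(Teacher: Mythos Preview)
Your proof is correct and follows essentially the same route as the paper: both arguments hinge on (i) $X_j^t\le 0$ because the $k=0$ term of the minimum is the empty sum, and (ii) $W_j^t-Q_j^{t+1}\ge 0$, which the paper obtains via the max-plus rewriting $W_j^t-Q_j^{t+1}=\max(0,W_j^t+X_j^t-Q_j^t)$ and you obtain, equivalently but a touch more directly, from $Q_j^{t+1}=\min(W_j^t,\cdot)\le W_j^t$. Your closing remark that the phase-space condition $\sum_j Q_j^t<\sum_j W_j^t$ is not needed here is also in line with the paper, which does not invoke it in this proof.
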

\begin{proof}
Since
\begin{equation*}
-X_j^t =  \max_{0 \leq k \leq N-1}
\bigl(\sum_{l=1}^k (Q_{j-l}^t - W_{j-l}^t)\bigr) = 
\max \left[ \{ 0 \} \cup \max_{1 \leq k \leq N-1}
\bigl(\sum_{l=1}^k (Q_{j-l}^t - W_{j-l}^t)\bigr) \right] \geq 0,
\end{equation*}
one can deduce $Q_j^{t+1} > 0$ from the first equation in \eqref{i:eq:UDpToda}.
Also, since
\begin{equation*}
W_j^t - Q_j^{t+1} = W_j^t + \max( -W_j^t, X_j^t-Q_j^t)
= \max (0, W_j^t + X_j^t-Q_j^t) \geq 0,
\end{equation*}
we have $W_j^{t+1} > 0$ through its last equation.
\qed
\end{proof}

Therefore, without loss of generality,
we can restrict its phase space as
$\displaystyle{
\mathcal{T} \subset (\R_{>0})^{2N}.}$
%
This allows us to represent the time evolution of trop p-Toda by
a sequence of two-colored strips:
they are concatenations of, say white and black, segments
in which the lengths of the white (resp.~black) segments are equal to
$W_j$s (resp.~$Q_j$s).
See Fig.~\ref{f1} for an example.
Note that $\sum_{j=1}^N Q_j^t$ and $\sum_{j=1}^N W_j^t$ are conserved
quantities.

\begin{figure}[hbtp]
\centering
\scalebox{1}[1]{
\includegraphics[clip]{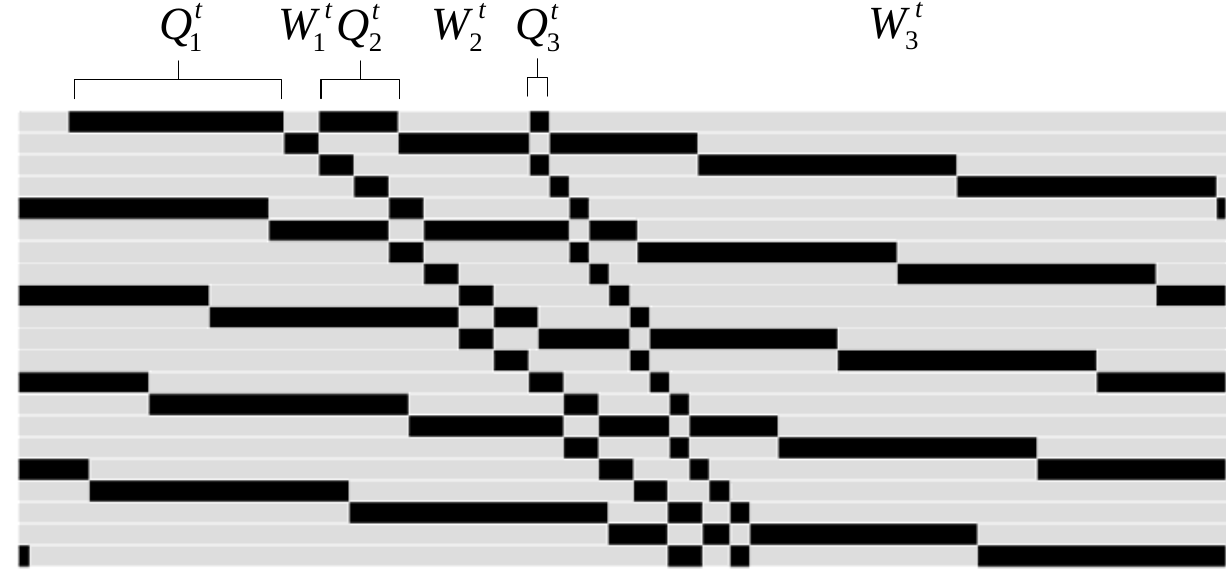}}
\caption{A representation for the time evolution of trop p-Toda \eqref{i:eq:UDpToda}.
Let $t$ be the time for the top row, where 
the lengths of three black segments (``solitons'') are $Q_1^t, Q_2^t, Q_3^t$,
and the lengths of white segments between them 
(with regard to the periodic boundary) are $W_1^t, W_2^t, W_3^t$.
In the second row
the length of the first black segment to the right of
the $j$-th soliton at time $t$ is  $Q_j^{t+1}$. 
In the same way, the strip at the $n$-th row visualizes the quantities   $Q_j^{t+n-1}$ and $W_j^{t+n-1}$.}
\label{f1}
\end{figure}

In fact, the evolution equations \eqref{i:eq:UDpToda} can only determine relative positions of the segments on the strips but
contain no data of their absolute positions.
In this paper we fix those positions in the same way as in the case of pBBS.
In what follows we shall give a formulation of trop p-Toda with this representation by
two-colored strips, where the condition $\sum_{j=1}^N Q_j^t < \sum_{j=1}^N W_j^t$ on the phase space is not always imposed.
\section{Periodic Box-Ball System}\label{sec:5}
As we have explained in \S \ref{sec:1}, the periodic box-ball system (pBBS) is a one-dimensional cellular automaton given by the 
evolution equations \eqref{i:eq:UDpToda} with all the $Q,W$ variables taking their values in $\Z_{>0}$.
In Ref. \citen{KTT} a formulation for pBBS based on the theory of crystals was given.
In order to illustrate background ideas for
our new formulation for trop p-Toda, 
we give a brief review on it.

Let $B_l$ be the crystal of the $l$-fold symmetric 
tensor representation of the quantized affine Lie algebra
$U_q(\widehat{\mathfrak{sl}}_2)$.
As a set it is given by 
$B_l = \{X=(X_1,X_2) \in (\Z_{\ge 0})^2 \mid X_1+X_2=l\}$.
The element $(X_1,X_2)$ will also be expressed as 
the length $l$ row shape semistandard tableau 
where the letter $i$ appears $X_i$ times.
For example,
$B_1=\big\{\fbox{1},\fbox{2}\big\}, 
B_2 = \big\{\fbox{11}, \fbox{12}, \fbox{22} \big\}$.
For two crystals $B$ and $B'$, one can define their tensor product
$B\ot B'=\{b\ot b'\mid b\in B,b'\in B'\}$. 
More general tensor products $B_{l_1}\otimes \cdots \otimes B_{l_k}$
can be defined by using it repeatedly.
The unique isomorphic map of the crystals 
$B_l \otimes B_k \overset{\sim}{\rightarrow}
B_k \otimes B_l$ 
is determined by intertwining properties associated with the actions of
raising and lowering operators.
It is an analogue of the quantum $R$ matrix for $U_q(\widehat{\mathfrak{sl}}_2)$, and
called the combinatorial $R$.
Explicitly it is given by
$R:  X\otimes  Y\mapsto
\tilde{Y} \ot \tilde{X}$ with 
\begin{align}\label{eq:h}
\begin{split}
&{\tilde X}_i = X_i-P_i(X,Y)+P_{i-1}(X,Y),\quad 
{\tilde Y}_i = Y_i-P_{i-1}(X,Y)+P_i(X,Y),\\
&P_i(X,Y) = l+k-\min (X_{i+1}, Y_i),
\end{split}
%
\end{align}
where all the indices are in $\Z_2$.
This expression for the combinatorial $R$ is equivalent to the $N=2$ case of (\ref{t:udRPP}).
The relation is depicted as

\begin{equation*}
\begin{picture}(90,40)(10,-9)
\put(0,10){\line(1,0){20}}\put(-7,7){$X$}\put(22,7){${\tilde X}$}
\put(10,0){\line(0,1){20}}\put(7,24){$Y$}\put(7,-12){${\tilde Y}$}
\put(47,8){or}
\put(80,10){\line(1,0){20}}\put(73,7){${\tilde Y}$}\put(102,7){$Y$.}
\put(90,0){\line(0,1){20}}\put(87,24){${\tilde X}$}\put(87,-11){$X$}
\end{picture}
\end{equation*}
For example $B_l \otimes B_1 \simeq B_1 \otimes B_l$ 
is listed as in Fig.~\ref{fig:cr}.
%
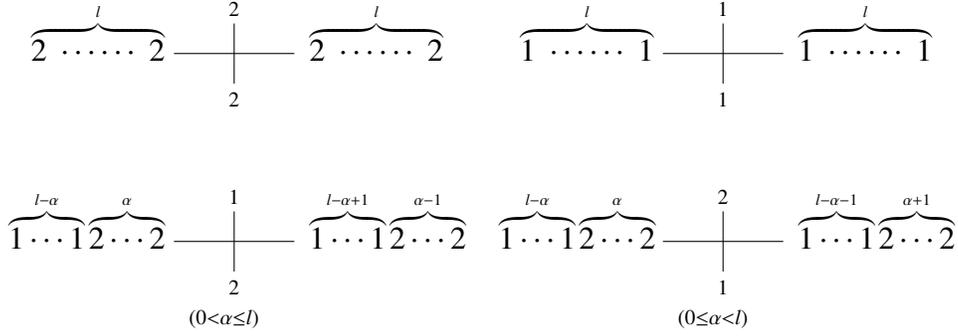
\begin{figure}[htb]
\begin{center}
\begin{picture}(130,120)(55,0)
\unitlength 1mm
\multiput(0,0)(0,25){2}{
\multiput(2,10)(65,0){2}{\line(1,0){16}}
\multiput(10,6)(65,0){2}{\line(0,1){8}}}

\put(9.3,40){$\scriptstyle 2$}
\put(-17,34){$\scriptstyle \overbrace{2 \;\cdots \cdots \; 2}^l$} 
\put(20,34){$\scriptstyle \overbrace{2 \;\cdots \cdots \; 2}^l$}
\put(9.3,28){$\scriptstyle 2$}

\put(74.3,40){$\scriptstyle 1$}
\put(48,34){$\scriptstyle \overbrace{1 \;\cdots \cdots \; 1}^l$} 
\put(85,34){$\scriptstyle \overbrace{1 \;\cdots \cdots \; 1}^l$}
\put(74.3,28){$\scriptstyle 1$}

\put(9.3,15){$\scriptstyle 1$}
\put(-20,9){$\scriptstyle \overbrace{1 \cdots 1}^{l-\alpha}
\overbrace{2\cdots 2}^\alpha$} 
\put(20,9){$\scriptstyle \overbrace{1 \cdots 1}^{l-\alpha+1}
\overbrace{2\cdots 2}^{\alpha-1}$}
\put(9.3,3){$\scriptstyle 2$}
\put(4,-1){$\scriptstyle (0<\alpha\le l)$}

\put(74.3,15){$\scriptstyle 2$}
\put(45.2,9){$\scriptstyle \overbrace{1 \cdots 1}^{l-\alpha}
\overbrace{2\cdots 2}^\alpha$} 
\put(85,9){$\scriptstyle \overbrace{1 \cdots 1}^{l-\alpha-1}
\overbrace{2\cdots 2}^{\alpha+1}$}
\put(74.3,3){$\scriptstyle 1$}
\put(69,-1){$\scriptstyle (0\le \alpha<l)$}

\end{picture}
\caption{Combinatorial $R: B_l \otimes B_1 \simeq B_1 \otimes B_l$}
\label{fig:cr}
\end{center}
\end{figure}
%
The combinatorial $R$ satisfies the Yang-Baxter relation:
\begin{equation}\label{eq:ybr}
(1 \otimes R)(R \otimes 1)(1 \otimes R) =
(R \otimes 1)(1 \otimes R)(R \otimes 1) 
\end{equation}
on $B_j \otimes B_l \otimes B_k$.

We fix the integer $L \in \Z_{\ge 1}$ 
corresponding to the system size.
Set 
\begin{equation}\label{eq:pdef}
\Pth = B_1^{\otimes L}.
\end{equation}
The elements of $\Pth$ are called paths.
%
The periodic box-ball system is 
a dynamical system on $\Pth$ with a family of
time evolutions $T_1, T_2, \ldots$ whose existence is assured by\cite{KTT}:

\begin{Prop}\label{pr:tl}
For any path $p=b_1\otimes \cdots \otimes b_L \in \Pth$ 
and $l \in \Z_{\ge 1}$, there exists an element 
$v_l \in B_l$ such that 
\begin{equation}\label{eq:bpe}
v_l \otimes (b_1 \otimes \cdots \otimes b_L) 
\simeq 
(b'_1 \otimes \cdots \otimes b'_L) \otimes 
v_l
\end{equation}
for some $p' = b'_1 \otimes \cdots \otimes b'_L
\in \Pth$ under the isomorphism 
$B_l\otimes \Pth \simeq 
\Pth \otimes B_l$. 
Such $p'$ is uniquely determined
even if the possible choice of $v_l$ is not unique.
\end{Prop}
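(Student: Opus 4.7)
The plan is to rephrase the statement in terms of a transport map $\Phi_p:B_l\to B_l$, and then establish the existence of a fixed point (giving $v_l$) together with the uniqueness of the associated output path.

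Setup. For any $v\in B_l$, apply the combinatorial $R$ of \eqref{eq:h} sequentially at each of the $L$ tensor slots of $p$: $v\otimes b_1\simeq b_1'\otimes v_1$, $v_1\otimes b_2\simeq b_2'\otimes v_2$, and so on. Setting $\Phi_p(v):=v_L$ and $p^{\ast}(v):=b_1'\otimes\cdots\otimes b_L'$, one obtains $v\otimes p\simeq p^{\ast}(v)\otimes \Phi_p(v)$, and this is the unique isomorphism since each local $R$ is unique. The proposition is then equivalent to (a) $\Phi_p$ having a fixed point $v_l\in B_l$, and (b) $p^{\ast}(v_l)=p^{\ast}(\tilde v_l)$ whenever $v_l$ and $\tilde v_l$ are both fixed points.

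Existence of a fixed point. Iterate $\Phi_p$ starting from any $v^{(0)}\in B_l$. Because $B_l$ is finite, the sequence $v^{(n+1)}=\Phi_p(v^{(n)})$ is eventually periodic with some period $T$, and the task is to rule out $T>1$. Here I would exploit two structural features of the combinatorial $R$. First, restricted to the diagonal of $B_l\otimes B_l$ it is the identity: when $X=Y$ in \eqref{eq:h}, both $P_{i-1}$ and $P_i$ equal $2l-\min(X_1,X_2)$, forcing $\tilde X=X$ and $\tilde Y=Y$. Second, the Yang-Baxter relation \eqref{eq:ybr}. Combining these, one analyzes the tensor $v^{(0)}\otimes v^{(1)}\otimes\cdots\otimes v^{(T-1)}\otimes p$ by sliding the rightmost $v^{(k)}$'s through $p$ using the iteration pattern and then collapsing adjacent $B_l\otimes B_l$ pairs via the identity-on-diagonal; performing these reorganizations in two different orders and demanding consistency through Yang-Baxter forces $T=1$. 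A complementary route that may be easier to make rigorous is a monotonicity argument on $B_l$: the number of $2$'s in $v^{(n)}$ can be shown to be eventually monotone along the iteration, hence the sequence stabilizes at a fixed point.

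Uniqueness of $p'$. Suppose $v_l$ and $\tilde v_l$ are two fixed points of $\Phi_p$ with associated paths $p'$ and $\tilde p'$. Consider the tensor $v_l\otimes\tilde v_l\otimes p$. Pushing $\tilde v_l$ through $p$ using its fixed-point property turns it into $v_l\otimes\tilde p'\otimes\tilde v_l$; sliding $v_l$ further through $\tilde p'$ yields a length-$L$ path followed by $w\otimes\tilde v_l$ for some $w\in B_l$. Alternatively, one may first apply the combinatorial $R$ to the $v_l\otimes\tilde v_l$ pair and then use the fixed-point property of $v_l$, producing a second expression. The Yang-Baxter relation \eqref{eq:ybr} equates the two computations, and reading off the $L$-letter path prefix yields $p'=\tilde p'$.

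The main obstacle is clearly the existence step: a generic self-map of a finite set can have cycles of any length, so the argument must genuinely exploit the identity-on-diagonal property of the combinatorial $R$ together with the Yang-Baxter relation, or construct a suitable monotone statistic, in order to force the cycle length to be one.
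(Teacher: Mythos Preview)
The paper does not prove this proposition itself; it is quoted from Ref.~\cite{KTT}. However, the paper does prove its continuous analogue, Theorem~\ref{th:may2_1}, and that proof specializes immediately to the discrete $B_l$ setting. So the relevant comparison is between your proposal and the argument for Theorem~\ref{th:may2_1}.

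Your Yang--Baxter route to existence has a genuine gap. You describe forming $v^{(0)}\otimes\cdots\otimes v^{(T-1)}\otimes p$, sliding the $v^{(k)}$'s through $p$, and ``collapsing adjacent $B_l\otimes B_l$ pairs via the identity-on-diagonal,'' but the carriers $v^{(k)}$ and $v^{(k+1)}$ are in general \emph{different} elements of $B_l$, so the identity-on-diagonal does not apply to them; and after sliding through $p$ the pair you recover is $(v^{(k+1)},v^{(k+2)})$, still not diagonal unless $T=1$ already. No amount of Yang--Baxter reshuffling produces a diagonal pair from a non-diagonal one, so as written the argument is circular. You flag this yourself as ``the main obstacle,'' and indeed it is not repaired by what you have sketched.

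The monotonicity approach you mention in passing is precisely what the paper does, and it is both short and complete. Write the output carrier's $2$-content $Y_2$ as a function of the input's $2$-content $\xi=X_2$. A single application of the $R$ in \eqref{eq:h} to $(l-\xi,\xi)\otimes b$ with $b\in B_1$ changes $\xi$ by $0$ or $\pm 1$ in a way that is monotone in $\xi$; composing over the $L$ letters, $Y_2:\{0,\dots,l\}\to\{0,\dots,l\}$ is non-decreasing with increments in $\{0,1\}$. Such a function always meets the diagonal, which gives $v_l$. For uniqueness of $p'$, the paper again argues directly: if the fixed-point set is an interval (the slope-$1$ region), then along that interval every local vertex is of one of the two ``pass-through'' types in Fig.~\ref{f4}, so the emitted $b_i'$ are independent of $\xi$ there. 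Your Yang--Baxter argument for uniqueness may be salvageable, but you have not actually carried out the comparison of the two sides; the direct structural argument is cleaner and avoids that bookkeeping.
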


For any $l \in \Z_{\ge 1}$ we define $T_l: \Pth \rightarrow \Pth$ by
$T_l (b_1\otimes \cdots \otimes b_L) = b'_1 \otimes \cdots \otimes b'_L$, where the right hand side
is given by \eqref{eq:bpe}.

\begin{Example}\label{t:july15c}
The time evolutions of $p=222111211111$ by $T_l$ with 
$l \geq 3, T_2$ and $T_1$:
\par\noindent
\begin{verbatim}
      t=0  222...2.....  |  222...2.....  |  222...2.....
      t=1  ...222.2....  |  ..222..2....  |  .222...2....
      t=2  ......2.222.  |  ....222.2...  |  ..222...2...
      t=3  22.....2...2  |  ......22.22.  |  ...222...2..
      t=4  ..222...2...  |  2.......2.22  |  ....222...2.
      t=5  .....222.2..  |  222......2..  |  .....222...2
      t=6  2.......2.22  |  ..222.....2.  |  2.....222...
\end{verbatim}

Here we denoted by a dot ``$.$'' for the letter $1$.
\end{Example}

By using the Yang-Baxter relation \eqref{eq:ybr}, one can prove that
these time evolutions are commutative $[T_k, T_l] = 0$ \cite{KTT}.
The time evolution of the original pBBS \cite{YYT} is given by $T_\infty$.
 
The name of a {\em box-ball system} comes from an interpretation of this dynamical system
which we now present.
There is a carrier of balls which travels along an array of boxes, where each box accommodates one ball.
More precisely, one regards $1 \in B_1$ as an empty box and $2 \in B_1$ a ball within a box.
Every element of $B_l$ in Fig.~\ref{fig:cr} is regarded as a carrier of capacity $l$:
if $X=(X_1,X_2) \in B_l$, then it has $X_2$ balls in it.
The carrier travels from the left to the right while picking up/putting a ball out of/into the box if possible. 
Our new formulation of trop p-Toda can be regarded as a continuous analogue of this picture.
\section{Construction of the Time Evolutions}\label{sec:6}
We have introduced a description of trop p-Toda by means of two-colored strips
at the end of \S \ref{sec:4}. 
In this section we define a family of time evolutions in trop p-Toda 
by using this description.
This gives a generalization of the formulation for pBBS in Ref.~\citen{KTT}.

Let $X, Y$ be a pair of strips of length $l$ and $w$.
We assume that the $Y$ is either white or black, while
the $X$ has generally two parts, white in the left and black in the right.
When the length of the black part of $X$ is $x$,
we denote it by $X = (l-x,x)$.

Let 
$\overline{R}:  (X,   Y) \mapsto ( \tilde{Y} , \tilde{X})$ 
be a transformation between pairs of strips 
defined as in Fig.~\ref{f2}, where $\tilde{X}, \tilde{Y}$ are strips of length $l$ and $w$ respectively.
Here we arranged $X,Y,\tilde{X},\tilde{Y}$ around each vertex as
\begin{equation*}
\begin{picture}(90,40)(10,-9)
\put(0,10){\line(1,0){20}}\put(-7,7){$X$}\put(22,7){${\tilde X}.$}
\put(10,0){\line(0,1){20}}\put(7,24){$Y$}\put(7,-12){${\tilde Y}$}
\end{picture}
\end{equation*}
\begin{figure}[hbtp]
\centering
\scalebox{0.6}[0.6]{
\includegraphics[clip]{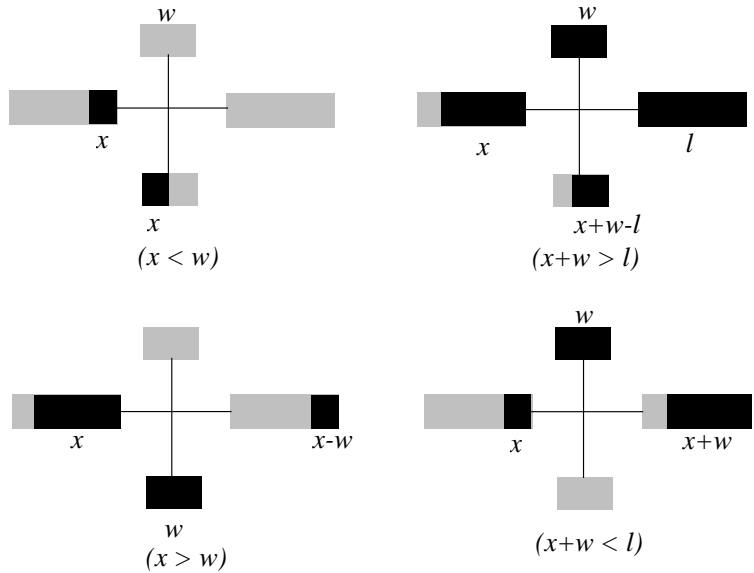}}
\caption{Definition of the transformation $\overline{R}:  (X,   Y) \mapsto ( \tilde{Y} , \tilde{X})$.
Except the top left one where the left part of 
the bottom strip is black, the actions of this $\overline{R}$ can be interpreted as that of the $\widehat{\mathfrak{sl}}_2$ case of the 
combinatorial $R: \mathbb{B}_l \times \mathbb{B}_w \rightarrow \mathbb{B}_w \times \mathbb{B}_l$ in \S \ref{sec:3}.}
\label{f2}
\end{figure}
By using this transformation, we define a time evolution $T_l$ 
for any $l > 0$ in trop p-Toda.

First we explain the procedure to define $T_l$
along an example.
See Figs.~\ref{f2x} and \ref{f2xx}. 
We divide a two-colored strip $S_1$ into black and white segments, and apply
the transformation $\overline{R}$ to each segment repeatedly.
Finally we concatenate the resulting segments into a new strip $S_2$.
Define $T_l$ by $T_l: S_1 \mapsto S_2$.
The time evolution of the original trop p-Toda in Fig.~\ref{f1} is given by $T_{\infty}$. 
By an analogy with pBBS, we call the strips on the horizontal line in Fig.~\ref{f2x} the
{\em carriers} of capacity $l$.
An important point here is that one can always make the contents of the two carriers at the both ends
identical as required to satisfy the periodic boundary condition.
As in the case of pBBS, this time evolution is invertible.
Hence the backward time evolution $T_l^{-1}$ is also uniquely determined.

\begin{figure}[hbtp]
\centering
\scalebox{1}[1]{
\includegraphics[clip]{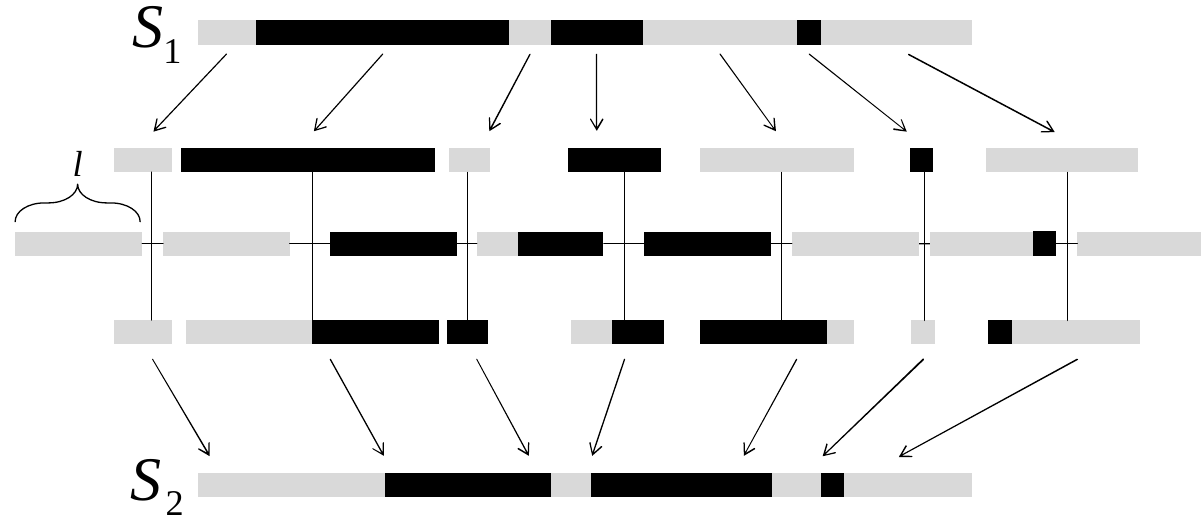}}
\caption{Definition of the time evolution $T_l:S_1 \mapsto S_2$ by division of a strip into segments.}
\label{f2x}
\end{figure}

\begin{figure}[hbtp]
\centering
\scalebox{1}[1]{
\includegraphics[clip]{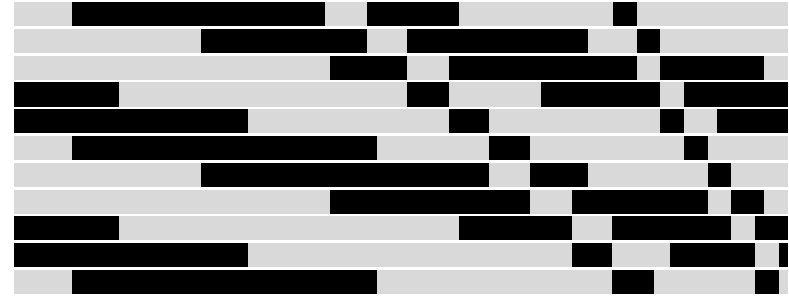}}
\caption{Successive time evolutions of trop p-Toda by a carrier of capacity $l$.
The transition from the first line to the second line is given by $T_l: S_1 \mapsto S_2$ in Fig.~\ref{f2x}.}
\label{f2xx}
\end{figure}

Let us explain the definition of $T_l:S_1 \mapsto S_2$ more in detail.
Denote by $b_1, \ldots, b_K$ the black and the white segments obtained from the strip $S_1$ after the division.
Denote by $X = (X_1, X_2)$ the carrier of capacity $l$ at the left end in Fig.~\ref{fig:xp2}.
As is depicted there,
we apply the transformation $\overline{R}$ repeatedly and obtain
a sequence of segments $b'_1, \ldots, b'_K$ and a carrier $Y = (Y_1,Y_2)$ at the right end.
Let $S_2$ be the strip obtained by concatenating the segments $b'_1, \ldots, b'_K$.
The following Theorem \ref{th:may2_1} allows one to make $X=Y$ and to determine the associated strip $S_2$ uniquely.

We denote by $L$ the common length of the strips $S_1$ and $S_2$.
Let $w (S_i) \, (i=1,2)$ be the sum of the lengths of the white segments in $S_i$, and 
$b (S_i) = L - w (S_i)$ that for the black segments.
Let $\wt (S_i) = w (S_i) - b (S_i)$ be the weight of $S_i$.
Now we present:
\begin{Theorem}\label{th:may2_1}
Let $l$ be any positive real number, and
$S_1$ be any two-colored (black and white) strip.
Given $X = (X_1,X_2)$,
let $S_2,Y = (Y_1, Y_2)$ be those defined in the above procedure.
Then there exists a real number $0 \leq \eta \leq l$ such that
if $X_2 = \eta$, then $Y_2 = \eta$.
Such $\eta$ is unique except $\wt(S_1)=0$ case, where 
$S_2$ 
is independent of the possibly non-unique choice of $\eta$.
\end{Theorem}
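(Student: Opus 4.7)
The plan is to analyze the map that carries the $X_2$-coordinate of the left carrier to the $Y_2$-coordinate of the right carrier, reducing existence of $\eta$ to the intermediate value theorem and uniqueness to a slope analysis controlled by $\wt(S_1)$.

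For $x \in [0, l]$, let $\phi(x) := Y_2$, where $Y = (Y_1, Y_2)$ is the carrier that emerges at the right end when the leftmost carrier is $X = (l-x, x)$ and $\overline{R}$ is applied successively through the homogeneous segments $b_1, \ldots, b_K$ of $S_1$. From Fig.~\ref{f2} (equivalently the $N=2$ case of \eqref{t:udRPP}), a direct computation shows that a single $\overline{R}$ step maps the carrier's black content by $x \mapsto \max(x - w, 0)$ when crossing a white segment of length $w$, and by $x \mapsto \min(x + w, l)$ when crossing a black segment of length $w$. Both maps are continuous, non-decreasing, and piecewise-linear with slope in $\{0, 1\}$; these properties are inherited by the composition $\phi : [0, l] \to [0, l]$.

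Existence of $\eta$ then follows from the intermediate value theorem applied to $g(x) := \phi(x) - x$, since $g(0) = \phi(0) \geq 0$ and $g(l) = \phi(l) - l \leq 0$. Moreover $g$ has slope in $\{-1, 0\}$ and so is non-increasing; hence $g^{-1}(0)$ is either a single point or a closed interval $[\eta_1, \eta_2]$.

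For uniqueness when $\wt(S_1) \neq 0$: suppose the fixed-point set contains two distinct points $\eta_1 < \eta_2$. By monotonicity of $g$, one has $g \equiv 0$ on $[\eta_1, \eta_2]$, so $\phi$ has slope $1$ throughout, which forces every $\overline{R}$ step to sit in its slope-$1$ regime (the carrier never ``saturates'') for every $x$ in this interval. In this regime the carrier's black content shifts by $-w$ across each white segment and by $+w$ across each black segment, yielding $\phi(x) - x = b(S_1) - w(S_1) = -\wt(S_1)$; the vanishing on the interval forces $\wt(S_1) = 0$, a contradiction. For the remaining claim, still in the slope-$1$ regime, an explicit use of \eqref{t:udRPP} shows that each $\overline{R}$ step simply flips the color of the segment it crosses (a white segment of length $w$ becomes a black segment of length $w$, and vice versa); hence $b'_1, \ldots, b'_K$ are exactly the color-reversed copies of $b_1, \ldots, b_K$ for every $\eta \in [\eta_1, \eta_2]$, and $S_2$ is independent of $\eta$. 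The main obstacle I anticipate is handling carefully the ``exceptional'' configuration at the top-left of Fig.~\ref{f2} (where the resulting strip orientation is reversed from the usual white-left/black-right convention), but this amounts to checking one boundary case of \eqref{t:udRPP} and does not disturb the slope bookkeeping above.
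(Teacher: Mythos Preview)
Your proof is correct and follows essentially the same line as the paper's: both treat $Y_2$ as a continuous, non-decreasing, piecewise-linear function of $X_2$ with slope in $\{0,1\}$, conclude that the fixed-point set is a point or an interval, and show that on any interval of fixed points the slope-$1$ regime forces $\wt(S_1)=0$ with $S_2$ the color-reversal of $S_1$. Your execution is slightly more streamlined in that you write down the explicit carrier transitions $x\mapsto\max(x-w,0)$ and $x\mapsto\min(x+w,l)$ and read off $\phi(x)-x=-\wt(S_1)$ directly, whereas the paper first performs a fine splitting of the vertices (Figs.~\ref{f3}--\ref{f4}) to reduce everything to combinatorial-$R$ vertices and then deduces uniqueness via a conservation-of-black-length count.
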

\begin{proof}
The top two types of the vertices in Fig.~\ref{f2}
can be split as in Fig.~\ref{f3}.
Given $S_1$ and $X$, we split each vertex in Fig.~\ref{fig:xp2} accordingly.
By renewing notations, we denote the resulting sequences by
$b_1, \ldots, b_K$ and $b'_1, \ldots, b'_K$ again.
Obviously, this procedure does not affect $S_2$ and $Y$.
Note that after this splitting procedure, every vertex in Fig.~\ref{fig:xp2}
is one of those depicted in Fig.~\ref{f4}.
Hence they can be interpreted as the $\widehat{\mathfrak{sl}}_2$ case of the combinatorial $R$ in \S \ref{sec:3}.

Regard 
$Y_2=Y_2(\xi)$ as a function of $\xi = X_2$.
Then $Y_2: [0,l] \rightarrow [0,l]$ is a non-decreasing continuous piecewise-linear function, and
the slope of its graph is either $0$ or $1$.
Hence the intersection of its graph and the line $Y_2 = \xi$ is
either a point or a line segment.

Set $c=Y_2(0)$.
If $Y_2(l) = c$, then it is clear that $\xi = c$ is the unique solution of $Y_2 = X_2$.
Suppose otherwise.
Then there exists $0 \leq a < l$ such that $Y_2(a)=c$ and
$Y_2(\xi)$ is strongly increasing at $\xi = a$.
This condition is satisfied only if 
every vertex in Fig.~\ref{fig:xp2} (after the splitting of the vertices as described above)
with $(X_1,X_2) =(l-a,a)$ falls into the following two types: 
either the bottom left type with $x < l$, or
the bottom right type with $x < l-w$ in Fig.~\ref{f4}.
This implies 
$\wt(S_2)=-\wt(S_1)$.
There are three cases.

{\em Case 1:}
Let $M = b (S_1)$.
We first consider the case $M<L-M$, i.~e.~$\wt(S_1)>0$.
Since $\wt(S_2)=-\wt(S_1)$, one has $b (S_2) = L-M$ for $(X_1,X_2)=(l-a,a)$ in Fig.~\ref{fig:xp2}.
Then the conservation of the total lengths of the black segments demands that
$a+M=c+L-M$. By the assumption $M<L-M$,
we have $a > c$.
Then by the reason on the intersection of the graphs
one easily sees that the unique solution of $Y_2 = X_2$ is $\xi = c$.

{\em Case 2:}
The case $\wt (S_1) < 0$ can be shown similarly by
interchanging the role of black and white.

{\em Case 3:}
Next we consider the case $\wt(S_1)=0$.
Define $a$ and $c$ as above.
Then the condition $\wt(S_2)=-\wt(S_1)$ demands $a = c$.
In this case the intersection of the graphs is a line segment, hence 
there exists $c < c' \leq l$ such that
we have $Y_2(\eta) = \eta$ for $ \eta \in [c,c']$.
This implies that
every vertex in Fig.~\ref{fig:xp2} with $X_2 = \eta \in [c,c']$ is one of
the bottom two types in Fig.~\ref{f4}. 
Namely, for any $X_2 \in [c,c']$ 
every pair $(b_i, b'_i)$ in Fig.~\ref{fig:xp2} has opposite colors.
This verifies the existence of the solution of $Y_2 = X_2$ and the uniqueness of the associated strip $S_2$.
\qed
\end{proof}

\begin{figure}[h]
\begin{picture}(50,50)(-140,5)
\unitlength 1mm
\put(0,10){\line(1,0){20}}\put(33,10){\line(1,0){10}}
\put(24.5,9){$\cdots$}
\put(-22,9){$\scriptstyle \overbrace{\phantom{1 \cdots 1}}^{X_1}
\overbrace{\phantom{2\cdots 2}}^{X_2}$}
\put(-31,9){$X=$}
\put(68,9){$=Y$}
\put(-22,9){\framebox(10.5,3){}}
\put(-11.5,9){\framebox(10.5,3){}}
\multiput(-11.5,9)(0.5,0){21}{\line(0,1){3}}
\multiput(-11.5,9)(0,0.5){6}{\line(1,0){10.5}}
\put(45,9){$\scriptstyle \overbrace{\phantom{1 \cdots 1}}^{Y_1}
\overbrace{\phantom{2\cdots 2}}^{Y_2}$}
\put(45,9){\framebox(10.5,3){}}
\put(55.5,9){\framebox(10.5,3){}}
\multiput(55.5,9)(0.5,0){21}{\line(0,1){3}}
\multiput(55.5,9)(0,0.5){6}{\line(1,0){10.5}}
\put(5,6){\line(0,1){8}} 
\put(4,16){$\scriptstyle b_1$}\put(4,2.2){$\scriptstyle b'_1$}
\put(13,6){\line(0,1){8}}
\put(12,16){$\scriptstyle b_2$}\put(12,2.2){$\scriptstyle b'_2$}
\put(38,6){\line(0,1){8}}
\put(37,16){$\scriptstyle b_K$}\put(37,2.2){$\scriptstyle b'_K$}
\end{picture}
\caption{Diagram to define time evolutions in trop p-Toda.}
\label{fig:xp2}
\end{figure}
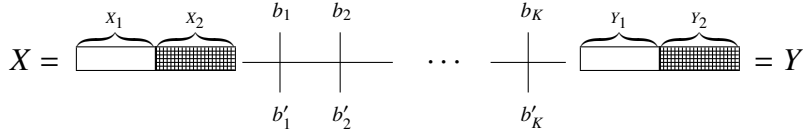

\begin{figure}[hbtp]
\begin{center}
\scalebox{1}[1]{
\includegraphics[clip]{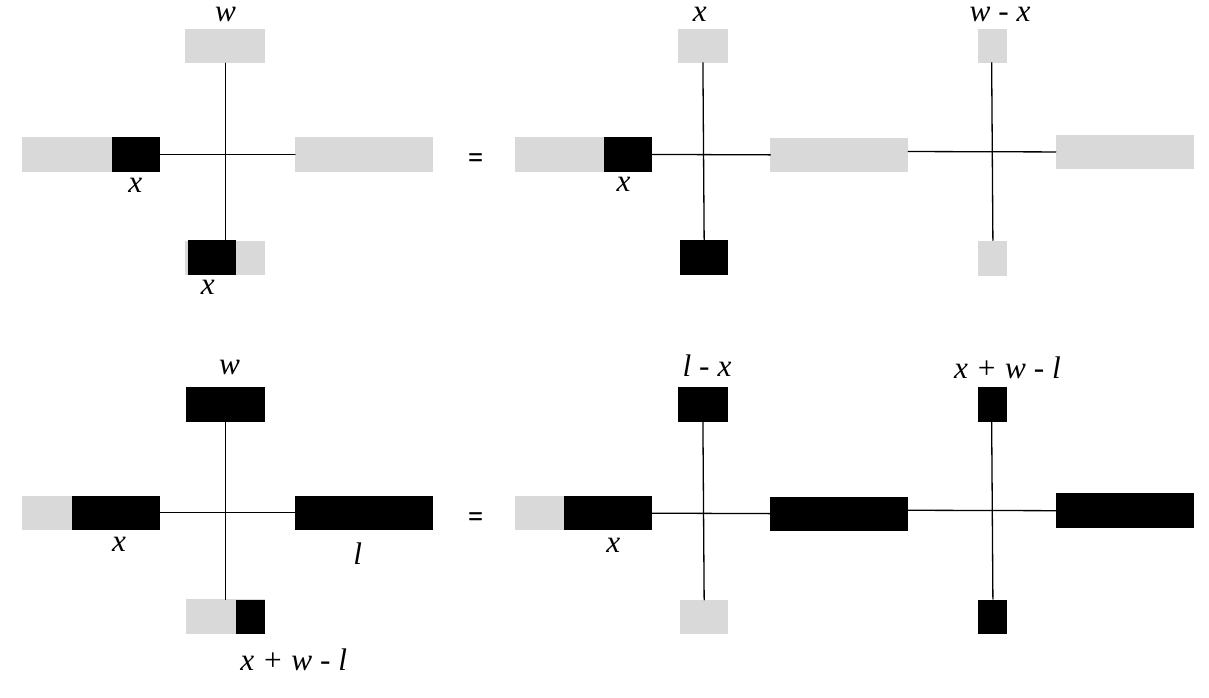}}
\end{center}
\caption{Fine splitting of the local time evolution rules.}
\label{f3}
\end{figure}

\begin{figure}[hbtp]
\begin{center}
\scalebox{0.6}[0.6]{
\includegraphics[clip]{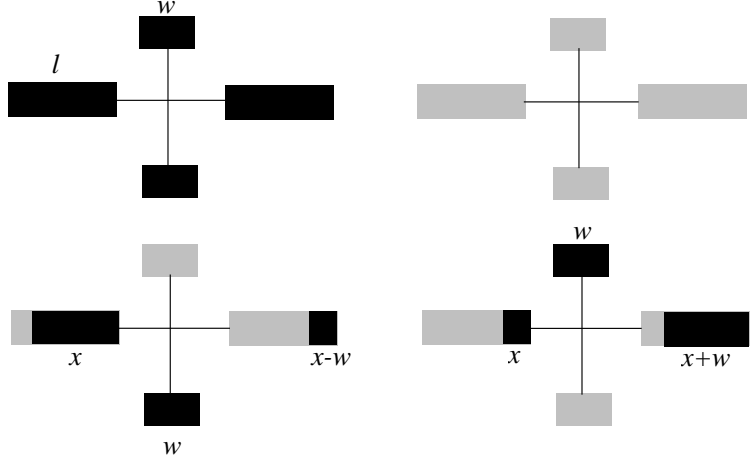}}
\end{center}
\caption{Local time evolution rules for trop p-Toda after the fine splitting of the segments. All the vertices can be interpreted as the $\widehat{\mathfrak{sl}}_2$ case of the combinatorial $R: \mathbb{B}_l \times \mathbb{B}_w \rightarrow \mathbb{B}_w \times \mathbb{B}_l$.}
\label{f4}
\end{figure}

Theorem \ref{th:may2_1} assures that
given any two-colored strip $p=S_1$ there is a carrier $X = (X_1,X_2)$ of capacity $l$
which satisfies the relation $X = Y$.
Denote any such carrier by $v_l (p)$.
We note that the above proof contains a practical procedure to
construct a $v_l (p)$:
when $\wt (p) \geq 0$ (resp.~$\wt (p) < 0$), set $(X_1, X_2) = (l,0)$ (resp.~$(X_1, X_2) = (0,l)$) 
in Fig.~\ref{fig:xp2} and one gets $v_l (p) = (Y_1, Y_2)$.
\section{Time Evolutions and Their Commutativity}\label{sec:7}
With $v_l (p)$ the two-colored strip $T_l (p) = S_2$ is uniquely determined.
Denote this relation by $v_l (p) \times p = T_l (p) \times v_l (p)$.
Then, the strip $T_k T_l (p)$ for any $k \in \R_{>0}$ is determined by the relation
$v_k (T_l (p)) \times T_l (p) = T_k  T_l (p) \times v_k (T_l (p))$.
In this section we prove the commutativity of the time evolutions $[T_l, T_k] = 0$.
\begin{Theorem}\label{t:july15f}
The commutativity $T_lT_k(p) = T_kT_l(p)$ holds.
\end{Theorem}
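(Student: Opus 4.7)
The plan is to imitate the standard Yang--Baxter argument that establishes commutativity of row-to-row transfer matrices, applied to the combinatorial $R: \mathbb{B}_l \times \mathbb{B}_k \to \mathbb{B}_k \times \mathbb{B}_l$ of \S \ref{sec:3}. After performing on $p$ the fine splitting of segments described in the proof of Theorem \ref{th:may2_1} (see Fig.~\ref{f4}), every local vertex appearing in our construction is an instance of this combinatorial $R$, so the Yang--Baxter relation \eqref{t:YBtropR} may be freely invoked at each vertex. Since the splitting does not alter the output strip nor the right-edge carrier, it suffices to prove commutativity within the fine-split picture.

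I would first assemble a two-row diagram representing $T_kT_l(p)$: the top row is a carrier of capacity $l$, the bottom row a carrier of capacity $k$, and the columns are indexed by the (fine-split) segments of $p$, with top input strip $p$ and bottom output strip $T_kT_l(p)$. By the definition of $T_l$ and the periodicity condition of Theorem \ref{th:may2_1}, the carriers on the left and right edges of the top row are both $v_l(p)$ and produce the intermediate strip $T_l(p)$; by definition of $T_k$ applied to $T_l(p)$, the left and right carriers of the bottom row are both $v_k(T_l(p))$. Thus, from top to bottom, the left and right carrier pairs of the full diagram both equal $(v_l(p),\,v_k(T_l(p)))$.

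Next, I would attach a vertical $R$-crossing to the right edge that maps $(v_l(p),\,v_k(T_l(p)))$ to some pair $(w_k,\,w_l)$ under $R: \mathbb{B}_l \times \mathbb{B}_k \to \mathbb{B}_k \times \mathbb{B}_l$, and use the Yang--Baxter relation \eqref{t:YBtropR} iteratively to slide this crossing from right to left across all $K$ columns. In the resulting equivalent diagram the rows are interchanged (capacity $k$ on top, capacity $l$ on bottom), the crossing now sits at the left edge, and because the \emph{same} pair $(v_l(p),\,v_k(T_l(p)))$ enters an identical $R$-crossing on both sides, the new left and right carrier pairs of the swapped diagram are both $(w_k,\,w_l)$. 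Hence the top row of the swapped diagram is a capacity-$k$ carrier crossing $p$ with matching carriers $w_k$ on both ends, which by Theorem \ref{th:may2_1} identifies its output strip as $T_k(p)$ (and $w_k$ as a valid $v_k(p)$); the bottom row is then a capacity-$l$ carrier crossing $T_k(p)$ with matching carriers $w_l$ on both ends, whose output strip is $T_lT_k(p)$. As the bottom output strip is preserved by the sliding, $T_kT_l(p) = T_lT_k(p)$.

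The main subtlety I expect is the possible non-uniqueness of $v_l$ when $\wt(p)=0$: the pair $(w_k,w_l)$ produced by sliding need not be the canonical choice $(v_k(p),\,v_l(T_k(p)))$. However, Theorem \ref{th:may2_1} asserts that the output strip depends only on the strip and the carrier capacity, not on the particular fixed-point carrier chosen, so reading the swapped diagram row by row still yields $T_k(p)$ and then $T_lT_k(p)$, and the argument is unaffected. A secondary bookkeeping issue is to verify that the fine splitting commutes with the sliding (i.e.~that one may apply the Yang--Baxter moves column by column on the split diagram and the carrier values remain consistent), but this is automatic since each split vertex is a combinatorial $R$ and the splitting of the top-left type in Fig.~\ref{f2} was engineered in the proof of Theorem \ref{th:may2_1} precisely to make every vertex such an $R$.
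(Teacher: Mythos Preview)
Your overall strategy---the standard Yang--Baxter sliding argument---is exactly what the paper does, and the way you handle the non-uniqueness of $v_l$ in the $\wt(p)=0$ case is fine.  However, the step you label a ``secondary bookkeeping issue'' and dismiss as ``automatic'' is in fact the one place where the argument requires genuine work, and your treatment of it is incomplete.

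The time evolutions $T_l,T_k$ are defined via the map $\overline{R}$ of \S\ref{sec:6}, which differs from the combinatorial $R$ of \S\ref{sec:3} in exactly one situation: the top-left vertex of Fig.~\ref{f2}, where $\overline{R}$ places the black part of the output segment on the \emph{left}.  The Yang--Baxter relation holds for $R$, not for $\overline{R}$.  When you slide the crossing through the diagram, the intermediate and bottom strips of the swapped picture are computed by $R$, and you must argue that they coincide with what $\overline{R}$ would give---otherwise you cannot invoke Theorem~\ref{th:may2_1} to identify the top row of the swapped diagram with $T_k(p)$.  Saying that the fine splitting ``was engineered \dots\ to make every vertex an $R$'' justifies this only for the \emph{original} diagram (carriers $v_l(p)$ and $v_k(T_l(p))$); it says nothing about the new carriers $w_k,w_l$ that appear after sliding.

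The paper closes this gap as follows.  Rather than split only with respect to $p$, it takes from the outset a \emph{common} fine splitting of the four strips $p$, $T_l(p)$, $T_kT_l(p)$, and $T_k(p)$.  After sliding, the top row of the swapped diagram has periodic carrier $w_k$ and some output $z$ computed via $R$; if one converts each vertex to its $\overline{R}$ form one obtains a sequence $z'$ with $z'=T_k(p)$ by Theorem~\ref{th:may2_1}.  The point is that $z$ and $z'$ can differ only by the placement of black within a two-colored output segment, but $z'=T_k(p)$ was included in the common refinement, so each of its segments is monochromatic and hence $z=z'$.  The same reasoning, using that $T_kT_l(p)$ is also in the common refinement, then handles the bottom row.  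You should replace your final paragraph with this argument.
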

\proof
We adopt the splitting procedure in the proof of Theorem \ref{th:may2_1}.
Not only the top two types of the vertices in Fig.~\ref{f2}, but also the bottom two types can be split trivially.
Hence one can consider a {\em common fine splitting} of more than two strips which are related by 
the (forward and backward) time evolutions.

Take a common fine splitting of $p, T_l (p), T_k T_l (p)$, and $T_k(p)$,
after which no two-colored segments exist and all the vertices involved are of the types in Fig.~\ref{f4}.
In what follows, we call the sequences of the fine split segments by the names of their original strips $p, T_l (p)$, etc.
Given $p$, let $v_l (p), {v}_{k}(T_l(p))$ be the associated carriers defined as above.
Let $\overline{v_{l}(p)}$ and $\overline{v_{k}(T_l(p))}$ be carriers defined by
$R ({v}_{k}(T_l(p)) \times {v}_{l}(p)) = \overline{v_{l}(p)} \times \overline{v_{k}(T_l(p))}$,
where $R$ is the $\widehat{\mathfrak{sl}}_2$ case of the combinatorial $R: \mathbb{B}_k \times \mathbb{B}_l \rightarrow \mathbb{B}_l \times \mathbb{B}_k$ in \S \ref{sec:3}.

See Fig.~\ref{fig:comm}
where the Yang-Baxter relation is used
to move the symbol ``$\times$'' for the $R$ from the left to the right.
\begin{figure}
\begin{center}
\setlength{\unitlength}{1.5mm}
\begin{picture}(80,55)(-20,0)

\put(0,10){\line(1,0){8}}
\put(0,20){\line(1,0){8}}

\put(11,9){$\cdots$} 
\put(12,19){$\cdots$}

\put(18,10){\line(1,0){8}}
\put(18,20){\line(1,0){8}}

\put(4,7){\line(0,1){5}}
\put(22,7){\line(0,1){5}}

\put(4,17){\line(0,1){5}}
\put(22,17){\line(0,1){5}}

\put(-13,19){$\overline{v_{k}(T_l(p))}$}
\put(-10,9){$\overline{v_{l}(p)}$}

\put(0,24){\framebox[110pt][c]{$p$}}
\put(0,14){\framebox[110pt][c]{$z$}}
\put(0,4){\framebox[110pt][c]{$w$}}

\put(28,9){$y$}
\put(28,19){$x$}

\put(32,10){\line(1,1){10}}
\put(32,20){\line(1,-1){10}}

\put(45,9){${v}_{k}(T_l(p))$}
\put(45,19){${v}_{l}(p)$}

\put(-15,14){$=$}

\put(-1.8,35){\line(1,1){10}}
\put(-1.8,45){\line(1,-1){10}}

\put(-9,35){$\overline{v_{l}(p)}$}
\put(-13,45){$\overline{v_{k}(T_l(p))}$}

\put(9,35){${v}_{k}(T_l(p))$}
\put(10,45){${v}_{l}(p)$}

\put(20,35){\line(1,0){8}}
\put(20,45){\line(1,0){8}}

\put(31,34){$\cdots$} 
\put(32,44){$\cdots$}

\put(38,35){\line(1,0){8}}
\put(38,45){\line(1,0){8}}

\put(24,32.5){\line(0,1){4.7}}
\put(42,32.5){\line(0,1){4.7}}

\put(24,42.8){\line(0,1){5}}
\put(42,42.8){\line(0,1){5}}

\put(20,50){\framebox[110pt][c]{$p$}}
\put(20,39){\framebox[110pt][c]{$T_{l}(p)$}}
\put(20,29.3){\framebox[110pt][c]{$T_{k}T_{l}(p)$}}

\put(47,35){${v}_{k}(T_l(p))$}
\put(47,45){${v}_{l}(p)$}


\end{picture}
\end{center}
\caption{Diagram to prove the commutativity $[T_l, T_k] = 0$.}
\label{fig:comm}
\end{figure}
Hence $w = T_{k}T_{l}(p)$.
Consider what $x,y,z$ should be.
Since $R ({v}_{k}(T_l(p)) \times {v}_{l}(p)) = \overline{v_{l}(p)} \times \overline{v_{k}(T_l(p))}$,
we have $x = \overline{v_{k}(T_l(p))}$ and $y = \overline{v_{l}(p)}$.

Note that the vertices between $p,z$ and $w$ in Fig.~\ref{fig:comm}
are not for the map $\overline{R}$ in \S \ref{sec:6} but for the map $R$ in \S \ref{sec:3}.
Suppose there are such vertices between $p$ and $z$ that the top segments in $p$ are white and the bottom segments in $z$
are two-colored with black right parts.
Replace them by those of the top left type in Fig.~\ref{f2}, i.e. move the black right parts to
the left ends of the bottom segments.  
Denote by $z'$ the sequence of segments obtained from $z$ after this procedure.
Then, all the vertices between $p$ and $z'$ are for the map $\overline{R}$, and we have $z' = T_k (p)$
since Theorem \ref{th:may2_1} tells $\overline{v_{k}(T_l(p))} (= x)$ is identified with ${v}_{k}(p)$.
However, since we had taken a common fine splitting of $T_l(p)$ and $T_k(p)$, no segments in $z'$ can actually be two-colored.
Hence $z = z' = T_k (p)$.

Since we had taken a common fine splitting of $T_k(p)$ and $T_kT_l(p)$,
all the vertices between 
$z$ and $w$ are for the map $\overline{R}$ as well as $R$.
Hence $\overline{v_{l}(p) } (= y)$ is identified with ${v}_{l}(T_k(p))$ by Theorem \ref{th:may2_1}.
Thus we have
$T_l T_k(p) = w = T_k T_l(p)$.
\qed

\section{Discussions}\label{sec:8}
In this paper we gave a formulation of the tropical (ultradiscrete) periodic Toda lattice
with a representation by two-colored strips, and constructed a family of commuting time evolutions.
Now we discuss some related topics and future problems.

Although we have restricted ourselves into the periodic boundary case,
this formulation is also available for the non-periodic boundary case.
We note that for the non-periodic box-ball system\cite{TS} a family of commuting time evolutions was constructed in Ref.~\citen{FOY00}.
As its generalization, it is now an easy exercise to treat
the non-periodic tropical (ultradiscrete) Toda lattice in our formulation.

In Ref.~\citen{T10} the author elucidated the level set structure of pBBS.
The level set was generally decomposed into several connected components,
which are the orbits of the actions of
an abelian group generated by commuting time evolutions in pBBS.
Every connected component was identified with (the set of integer points on) a {\em torus} whose shape was determined by
a combinatorics of Bethe ansatz\cite{KTT}.
Now we have obtained a family of commuting time evolutions in
trop p-Toda,
its level set (which is now an algebraic variety) can be elucidated in the same manner as in Ref.~\citen{T10}.
The author believes that this allows us to give a clearer interpretation of
previous works on the level set of this dynamical system\cite{IT07,IT09,IT09b}.

Moreover, a kind of partition function (in the sense of statistical mechanics)
has been defined on the level set of pBBS, and
a simple explicit expression for the partition function as a $q$-series was obtained in Ref.~\citen{KT11}.
Whether a similar function on the level set of trop p-Toda 
has a non-trivial simple expression is an open problem. 

Probably, the most challenging problem is to construct a family of commuting time evolutions
in dp-Toda \eqref{i:eq:Toda-q}.
The author considers that the method of constructing commuting time evolutions in trop p-Toda \eqref{i:eq:UDpToda}
in the present paper will shed a light on this problem.


\begin{acknowledgment}
This work was supported by 
Grant-in-Aid for Scientific Research (C) 22540241 from the Japan Society for the Promotion of Science.


\end{acknowledgment}



\end{document}